\theoremstyle{plain}
\newtheorem{theorem}{Theorem}
\newtheorem{definition}[theorem]{Definition}
\newtheorem{corollary}[theorem]{Corollary}
\newtheorem{lemma}[theorem]{Lemma}
\newtheorem{postulate}[theorem]{Postulate}
\theoremstyle{definition}
\newcommand{\cH}{\mathcal{H}}
\newcommand{\cK}{\mathcal{K}}
\newcommand{\cM}{\mathcal{M}}
\newcommand{\cS}{\mathcal{S}}
\newcommand{\cT}{\mathcal{T}}
\newcommand{\cV}{\mathcal{V}}
\newcommand{\cX}{\mathcal{X}}
\newcommand{\ketbra}[2]{\ket{#1}\!\bra{#2}}
\newcommand{\abs}[1]{\left|#1\right|}
\DeclareMathOperator{\conv}{conv}
\DeclareMathOperator{\sco}{sc} 
\DeclareMathOperator{\nege}{neg} 
\begin{document}

\title{Perfect Discrimination in Approximate Quantum Theory of General Probabilistic Theories}

\author{Yuuya Yoshida$^1$, Hayato Arai$^1$, and Masahito Hayashi$^{2,1,3,4}$}

\address{$^1$ Graduate School of Mathematics, Nagoya University, Nagoya, Furo-cho, Chikusa-ku, 464-8602, Japan}
\address{$^2$ Shenzhen Institute for Quantum Science and Engineering, 
Southern University of Science and Technology, Nanshan District, Shenzhen, 518055, China}
\address{$^3$ Center for Quantum Computing, Peng Cheng Laboratory, Shenzhen, 518000, China}
\address{$^4$ Centre for Quantum Technologies, National University of Singapore, 
3 Science Drive 2, 117542, Singapore}
\eads{\mailto{m17043e@math.nagoya-u.ac.jp}, \mailto{m18003b@math.nagoya-u.ac.jp}, 
\mailto{hayashi@sustech.edu.cn}, \mailto{masahito@math.nagoya-u.ac.jp}}
\vspace{10pt}
\begin{indented}
\item[] April 2020
\end{indented}

\begin{abstract}
As a modern approach for the foundation of quantum theory,
existing studies of General Probabilistic Theories gave various models of states and measurements that are quite different from quantum theory.
In this paper, to seek a more realistic situation, we investigate models approximately close to quantum theory.
We define larger measurement classes that are smoothly connected with the class of POVMs via a parameter,
and investigate the performance of perfect discrimination.
As a result, we give a sufficient condition of perfect discrimination,
which shows a significant improvement beyond the class of POVMs.
\end{abstract}

\vspace{2pc}
\noindent{\it Keywords\/}: perfect discrimination, approximate quantum theory, negative eigenvalues, separable states, general probabilistic theories

\maketitle

\section{Introduction}
Quantum Theory (QT) is described by operators on Hilbert spaces, 
and the description is suitable to represent physical systems.
Many researchers have tried to give a foundation of the mathematical description.
A modern operational approach that starts with statistics of measurement outcomes 
is called General Probabilistic Theories (GPTs) 
\cite{Janotta2014,Lami2017,Short2010,Barnum2010,Dahlsten2012,Yoshida2018,Aubrun2018,
Bae2016,Matsumoto2018,Janotta2013,Muller2012,Muller2013,Masanes2011,Richens2017,Lee2015,PR1994,Plavala2017,Arai2019}.
Simply speaking, a GPT is defined by state/measurement classes 
that satisfy the following postulate: 
\begin{quote}
	\textit{Non-negativity of probability}: 
	For each measurement and each state, 
	the probability to obtain each measurement outcome is non-negative.
\end{quote}
In QT, the state class and measurement class are given as density matrices 
and Positive-Operator Valued Measures (POVMs) respectively, 
which indeed satisfy non-negativity of probability.
In this way, QT is a typical example of GPTs, 
and so is Classical Probability Theory (CPT).
Unfortunately, there is no operational reason in the sense of GPTs 
why only QT and CPT describe physical systems.
That is, no studies investigated how one denies an alternative realistic model of GPTs 
while it is known that there are superior models to QT/CPT with respect to information processing 
\cite{Lami2017,Aubrun2018,Lee2015,PR1994,Plavala2017,Arai2019}.

Preceding studies of GPTs defined models by restricting a state class to a much smaller one than QT/CPT.
Once restricting a state class, 
non-negativity of probability becomes a weaker condition, 
and the allowed measurement class becomes larger.
Consequently, measurement classes of preceding studies are much larger than QT/CPT,
and the classes sometimes show superiority of information processing.
For example, the PR box, which is defined by restricting states to only convex combinations of four states, 
violates Bell's inequality more strongly than QT, i.e., 
exceeds Tsirelson's bound \cite{PR1994,Plavala2017}. 
Also, Ref.\ \cite{Arai2019} focused on the case 
when available states are restricted to only separable states and 
all measurements with non-negativity of probability are allowed.
The pair of these state/measurement classes is called SEP, 
and Ref.\ \cite{Arai2019} showed that SEP has 
the superiority of perfect discrimination of bipartite separable pure states.

\begin{figure}[t]
	\centering
	\includegraphics[scale=0.35]{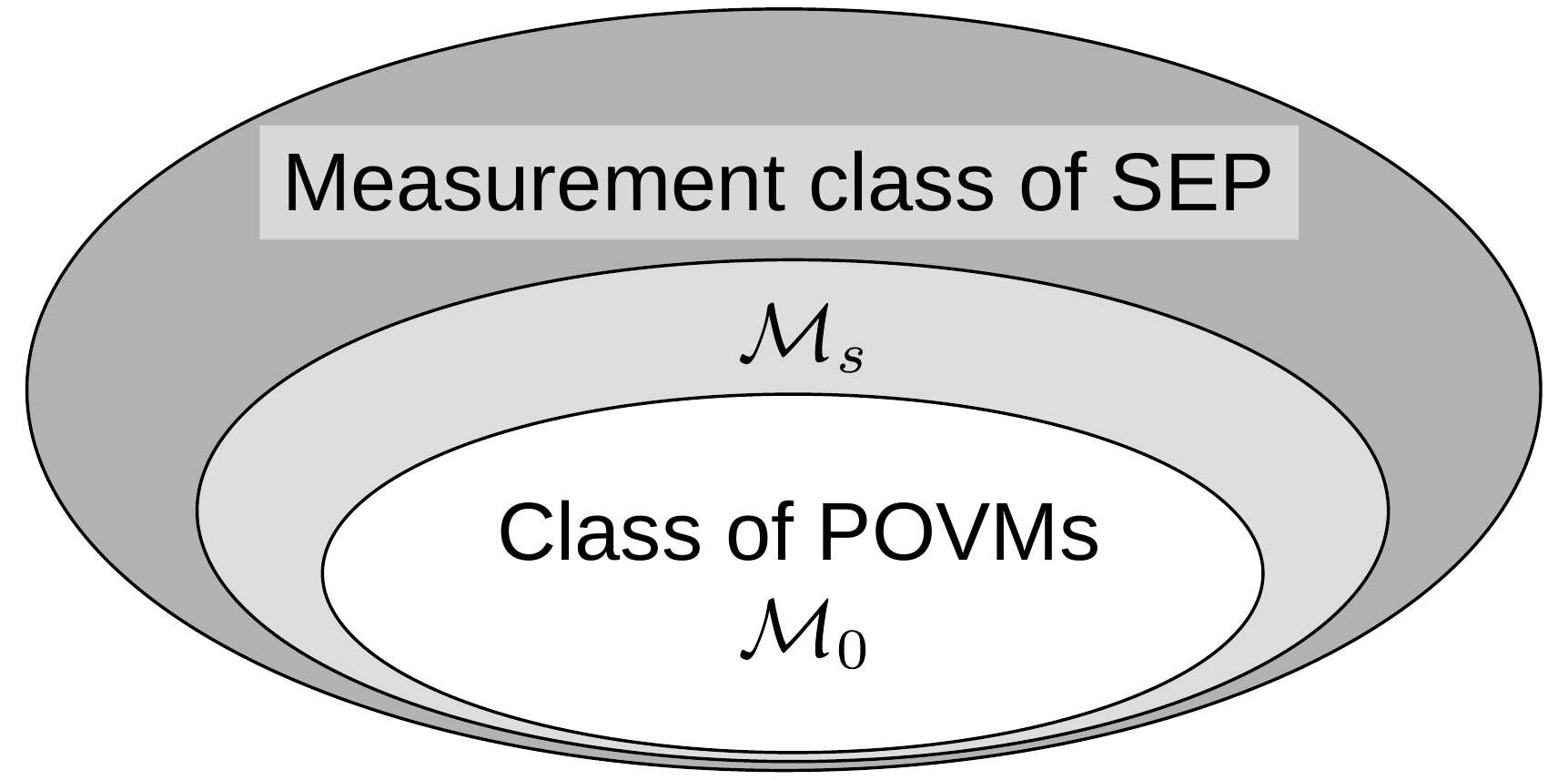}
	\caption{The measurement classes $\cM_s$ are smoothly connected 
	with the class $\cM_0$ of POVMs via the parameter $s$.}
	\label{fig:classes}
\end{figure}

However, since the above models are too far from QT, 
the reality of these models is easily denied. 
Hence, we should consider measurement classes like $\cM_s$ in Fig.~\ref{fig:classes} 
that are closer to the class of POVMs ($\cM_0$ in Fig.~\ref{fig:classes}) than the measurement class of SEP. 
If a measurement class is sufficiently close to the class of POVMs, 
it is hard for an experiment to deny the model 
because the difference between the experiment and model might be due to an experimental error.
In this paper, in order to deny such an alternative measurement class theoretically,
we investigate whether an extended measurement class drastically improves perfect discrimination of separable states 
even when it sufficiently approximates the class of POVMs.
For this aim, we define slightly larger measurement classes than that of POVMs to satisfy the following three conditions, 
and investigate what happens in adopting them.
(i) The measurement classes contain the class of POVMs. 
(ii) Non-negativity of probability holds for every separable state.
(iii) The measurement classes are represented as a continuous one-parameter family with the parameter $s$, 
and the case $s=0$ is just the class of POVMs.
For small $s>0$, the measurement class can be regarded as an approximation of the class of POVMs.

We consider two types of measurement classes.
The first one $\cM_s$ is given by the restriction of negative eigenvalues of measurement elements 
while eigenvalues of POVM elements are restricted to be non-negative.
The second one $\cM(\cK_s)$ is defined by the positive cone that is given as the sum of positive semi-definite matrices
and positive partial transpose of restricted entangled vectors.
As a result, we show that 
the performance of perfect discrimination is dramatically improved unless $s=0$, 
which implies that {}the above approximate classes of POVMs are unlikely to exist.

The remainder of this paper is organized as follows.
Section~\ref{sect2} describes QT in the framework of GPTs.
Section~\ref{sect3} defines approximate QT on a bipartite system.
Section~\ref{sect4} gives our main results and 
shows a drastic improvement of multiple-copy state discrimination.
Section~\ref{sect5} is a conclusion.

\section{Framework of GPTs}\label{sect2}
Throughout this paper, we only consider finite-dimensional systems.
As already stated, there are state/measurement classes in a GPT.
First, let us describe a state class.
To handle both pure and mixed states, 
the set of all states must be a closed convex set.
In GPTs, the set $\cS(\cK,u)$ of all states is defined 
by the intersection of an affine hyperplane and a positive cone: 
\[
\cS(\cK,u) = \set{x\in\cK | \langle x,u\rangle=1},
\]
where $\cK$ is a \textit{positive cone} of a finite-dimensional real Hilbert space $\cV$ 
equipped with an inner product $\langle\cdot,\cdot\rangle$, 
and the \textit{unit effect} $u$ is an interior point of the \textit{dual cone} $\cK^\ast$.
Here, $\cK$ is called a \textit{positive cone} 
if $\cK$ is a closed convex set satisfying that 
\begin{itemize}
	\item
	$\alpha x\in\cK$ for all $\alpha\ge0$ and $x\in\cK$;
	\item
	$\cK$ has non-empty interior;
	\item
	$\cK\cap(-\cK)=\{0\}$.
\end{itemize}
Also, for a positive cone $\cK$, 
the \textit{dual cone} $\cK^\ast$ is defined as 
\[
\cK^\ast = \set{y\in\mathcal{V} | \forall x\in\cK,\ \langle x,y\rangle\ge0},
\]
which is also a positive cone.

Next, let us describe a measurement class $\cM$ 
(of a GPT with a state class $\cS(\cK,u)$) 
by using non-negativity of probability.
A measurement is given as a family $\{y_i\}_{i=1}^n$, 
where $\{1,2,\ldots,n\}$ denotes the set of outcomes.
If a state $x\in\cS(\cK,u)$ is measured by a measurement $\{y_i\}_{i=1}^n$, 
then each outcome $i$ is obtained with probability $\langle x,y_i\rangle$.
Therefore, we need the following postulate.

\begin{postulate}[Non-negativity of probability]\label{post1}
	For each state $x\in\cS(\cK,u)$ and each measurement $\{y_i\}_{i=1}^n\in\cM$, 
	the family $\{\langle x,y_i\rangle\}_{i=1}^n$ is a probability distribution, i.e., 
	$\langle x,y_i\rangle\ge0$ for all outcomes $i$, 
	and $\sum_{j=1}^n \langle x,y_j\rangle=1$.
\end{postulate}

Due to Postulate~\ref{post1}, 
each measurement element $y_i$ must lie in $\cK^\ast$.
The largest measurement class with Postulate~\ref{post1} is given as 
\[
\cM(\cK^\ast,u) = \Set{\{y_i\}_{i=1}^n | 
\begin{array}{c}
	n\in\mathbb{N},\ \sum_{j=1}^n y_j=u,\\
	y_i\in\cK^\ast\ (\forall i)
\end{array}},
\]
but we do not assume that $\cM$ is the largest one.

Now, let us describe the state/measurement classes of QT 
by using the above framework of GPTs.
Assume that 
\begin{enumerate}
	\renewcommand{\theenumi}{QT\arabic{enumi}}
	\setlength{\leftskip}{2eM}
	\item
	$\cV$ is the set $\cT(\cH)$ of all Hermitian matrices 
	on a finite-dimensional complex Hilbert space $\cH$; 
	\item
	An inner product on $\cT(\cH)$ is given by $\langle X,Y\rangle=\Tr XY$; 
	\item
	$\cK$ is the set $\cT_+(\cH)$ of all positive semi-definite matrices on $\cH$; 
	\item
	$u$ is the identity matrix $I$ on $\cH$; 
	\item
	$\cM=\cM(\cT_+(\cH),I)$.
\end{enumerate}
Then $\cS(\cT_+(\cH),I)$ equals the set of all density matrices on $\cH$, 
and $\cM$ equals the class of POVMs.
Since these classes are usual ones in QT, 
it turns out that QT is a typical example of GPTs.


\textit{Perfect discrimination.}---%
Let $\{x_i\}_{i=1}^n$ be a family of $n$ states in $\cS(\cK,u)$. 
We say that $\{x_i\}_{i=1}^n$ is \textit{perfectly distinguishable} 
if there exists a measurement $\{y_j\}_{j=1}^n\in\cM$ 
such that $\braket{x_i,y_j}=\delta_{ij}$, 
where $\delta_{ij}$ denotes the Kronecker delta. 
In this paper, we address the case $n=2$ mainly.

\section{Approximate QT}\label{sect3}
We consider a bipartite system of two finite-dimensional quantum systems 
$\cH_A$ (Alice's system) and $\cH_B$ (Bob's system), 
but the bipartite system is not necessarily QT.
More precisely, we assume QT1, QT2, and QT4 for $\cH=\cH_A\otimes\cH_B$, 
but do not necessarily assume QT3 or QT5.
Let us consider such a bipartite system in the framework of GPTs.
When Alice and Bob prepare quantum states $\rho^A$ and $\rho^B$ independently, 
the product states $\rho^A\otimes\rho^B$ is prepared on the bipartite system.
Considering the convexity of a state class, 
we need the following postulate.

\begin{postulate}\label{post2}
	A state class of the bipartite system contains all separable states.
\end{postulate}

Hereinafter, $\cT(\cH_A)$ is denoted by $\cT(A)$.
The notations $\cT_+(A)$, $\cT(B)$, $\cT_+(B)$, 
$\cT(AB)$, and $\cT_+(AB)$ are similarly defined.
Also, since the unit effect of the bipartite system is always $I$, 
we denote $\cM(\cK,I)$ by $\cM(\cK)$ simply.
Let $\mathrm{SEP}(A;B)$ be the set 
\[
\Set{\sum_{i=1}^n X_i^A\otimes X_i^B | 
\begin{array}{c}
	n\in\mathbb{N},\ X_i^A\in\cT_+(A),\\
	X_i^B\in\cT_+(B)\ (\forall i)
\end{array}}.
\]
Ref.\ \cite{Arai2019} used the largest measurement class $\cM(\mathrm{SEP}(A;B)^\ast)$ 
to discriminate two separable pure states, 
but their measurement class is too far from the class of POVMs.
Therefore, we need to define a measurement class 
that is sufficiently close to the class of POVMs.
Moreover, for some state class with Postulate~\ref{post2}, 
the measurement class must satisfy Postulate~\ref{post1}.
Let us define such measurement classes in two different ways here.

\begin{definition}[Measurement class]
	For $s\ge0$, we define the measurement class $\cM_s$ as 
	\begin{equation*}
		\cM_s = \Set{\{M_i\}_{i=1}^n | 
		\begin{array}{c}
			n\in\mathbb{N},\ \sum_{j=1}^n M_j=I,\\
			M_i\in\mathrm{SEP}(A;B)^\ast,\\
			\nege(M_i)\le s\ (\forall i)
		\end{array}},
	\end{equation*}
	where for $X\in\cT(AB)$ the value $\nege(X)$ is defined as 
	\[
	\nege(X) =
	\begin{dcases}
		\max_{\substack{\lambda<0\\ \text{ eigenvalue of }X}} \abs{\lambda}
		&\text{if $X$ has a negative eigenvalue},\\
		0 & \text{otherwise}.
	\end{dcases}
	\]
\end{definition}

To define another one-parameter family $\cM(\cK_s)$ of measurement classes, 
we define the following special positive cones.

\begin{definition}[One-parameter family of positive cones]
	For a vector $v\in\cH_A\otimes\cH_B$, let $\sco(v)$ be the value 
	\[
	\sco(v) = 
	\begin{cases}
		\lambda_1\lambda_2 & v\not=0,\\
		0 & v=0,
	\end{cases}
	\]
	where $\lambda_1 \ge \lambda_2 \ge \cdots \ge \lambda_d$, $d=\min\{\dim\cH_A,\dim\cH_B\}$, 
	denote the Schmidt coefficients of $v/\|v\|$.
	Then, for $s\in[0,1/2]$, we define the positive cones $\cK_s^{(0)}$ and $\cK_s$ as 
	\begin{align*}
		\cK_s^{(0)} &= \conv\{ \ketbra{v}{v} \mid v\in\cH_A\otimes\cH_B,\ \sco(v)\le s \},\\
		\cK_s &= \cT_+(AB) + \Gamma(\cK_s^{(0)}),
	\end{align*}
	where $\conv(\cX)$ denotes the convex hull of a set $\cX\subset\cT(AB)$ 
	and $\Gamma$ denotes the partial transpose on Bob's system,
	i.e., $\Gamma$ is the linear map defined by the tensor product of identity map and transposition.
\end{definition}

The value $\sco(v)$ is closely related to negative eigenvalues of $\Gamma(\ketbra{v}{v})$: 
for every $v\in\cH_A\otimes\cH_B$ 
\begin{equation}
	\|v\|^2\sco(v) = \nege(\Gamma(\ketbra{v}{v})).
	\label{sc-neg}
\end{equation}
Eq.~\eqref{sc-neg} follows from the fact that, if $v\not=0$, 
the set of all eigenvalues of $\Gamma(\ketbra{v}{v})/\|v\|^2$ is 
$\set{\pm\lambda_i\lambda_j,\,\lambda_k^2 | 1\le i<j\le d,\ 1\le k\le d}$, 
where $\lambda_1\ge\cdots\ge\lambda_d$ denote the Schmidt coefficients of $v/\|v\|$.
Hence, the inequality $\sco(v)\le s$ is a restriction of negative eigenvalues of elements of $\cK_s$.
Since the Schmidt coefficients of a unit vector $v\in\cH_A\otimes\cH_B$ 
represent the amount of entanglement about the pure state $\ketbra{v}{v}$, 
one can also regard the inequality $\sco(v)\le s$ as a restriction of entanglement 
on the inside of the partial transpose $\Gamma$.
Also, once the parameter $s$ increases, 
the positive cones $\cK_s^{(0)}$ and $\cK_s$ become larger.
Thus, the following inclusion relations hold: 
\begin{gather*}
	\mathrm{SEP}(A;B) = \cK_0^{(0)} \subset \cK_s^{(0)} \subset \cK_{1/2}^{(0)} = \cT_+(AB),\\
	\cT_+(AB) = \cK_0 \subset \cK_s \subset \cK_{1/2} \subset \mathrm{SEP}(A;B)^\ast.
\end{gather*}
Note that the classes $\cM_0$ and $\cM(\cK_0)$ are the class of POVMs. 
Also, since $\cK_s^{(0)}$ satisfies \textit{local unitary invariance}, 
i.e., $(U_A\otimes U_B)\cK_s^{(0)}(U_A\otimes U_B)^\dag = \cK_s^{(0)}$ for all unitary matrices $U_A$ and $U_B$, 
no positive cones $\cK_s$ depend on an orthonormal basis of $\cH_B$ that defines the partial transpose $\Gamma$.

\section{Perfect discrimination in approximate QT}\label{sect4}
Let us consider perfect discrimination of separable pure states 
by measurements of $\cM_s$ and $\cM(\cK_s)$.
First, for separable pure states that are parameterized, 
we give concrete measurements in the case $\dim\cH_A=\dim\cH_B=2$.
Let $\rho_1$ and $\rho_2$ be the separable pure states given as 
\begin{equation}
	\rho_1 = 
	\begin{bmatrix}
		1&0\\
		0&0
	\end{bmatrix}
	\otimes
	\begin{bmatrix}
		1&0\\
		0&0
	\end{bmatrix}
	,\quad \rho_2 = 
	\begin{bmatrix}
		1-\alpha_1&\beta_1\\
		\beta_1&\alpha_1
	\end{bmatrix}
	\otimes
	\begin{bmatrix}
		1-\alpha_2&\beta_2\\
		\beta_2&\alpha_2
	\end{bmatrix}, \label{states}
\end{equation}
where $\alpha_1,\alpha_2\in[0,1]$ and $\beta_i = \sqrt{\alpha_i(1-\alpha_i)}$.
If the relations $s\in[0,1/2]$ and 
\begin{equation}
	(1-\alpha_1)(1-\alpha_2) \le 4s^2\alpha_1\alpha_2 \label{S1}
\end{equation}
hold, then $\rho_1$ and $\rho_2$ are perfectly distinguishable 
by some measurement $\{T_i+\Gamma(T_i)\}_{i=1,2}\in\cM_s$.
The measurement $\{T_i+\Gamma(T_i)\}_{i=1,2}\in\cM_s$ is given below 
except for the trivial cases $\alpha_1=1$ and $\alpha_2=1$:
If $\gamma \coloneqq \alpha_1+\alpha_2>1$, then 
\begin{gather*}
	2\gamma T_1 = \gamma\ketbra{v_1}{v_1} + (\gamma-1)\ketbra{v_2}{v_2} + (\gamma-1)\ketbra{v_3}{v_3},\label{eq01}\\
	v_1 =
	\begin{bmatrix}
		1\\
		0
	\end{bmatrix}
	\otimes
	\begin{bmatrix}
		1\\
		0
	\end{bmatrix}
	- \frac{\beta_1\beta_2}{\alpha_1\alpha_2}
	\begin{bmatrix}
		0\\
		1
	\end{bmatrix}
	\otimes
	\begin{bmatrix}
		0\\
		1
	\end{bmatrix}
	,\label{eq01'}\\
	v_2 =
	\begin{bmatrix}
		1\\
		-\beta_1/\alpha_1
	\end{bmatrix}
	\otimes
	\begin{bmatrix}
		0\\
		1\\
	\end{bmatrix}
	,\quad
	v_3 =
	\begin{bmatrix}
		0\\
		1\\
	\end{bmatrix}
	\otimes
	\begin{bmatrix}
		1\\
		-\beta_2/\alpha_2
	\end{bmatrix}
	,\label{eq01''}\\
	T_2=(U_A\otimes U_B)T_1(U_A\otimes U_B)^\dag, \label{local-uni}\\
	U_A = \frac{1}{\sqrt{\alpha_1}}
	\begin{bmatrix}
		\beta_1&\alpha_1\\
		\alpha_1&-\beta_1
	\end{bmatrix}
	,\quad
	U_B = \frac{1}{\sqrt{\alpha_2}}
	\begin{bmatrix}
		\beta_2&\alpha_2\\
		\alpha_2&-\beta_2
	\end{bmatrix};
\end{gather*}
if $\gamma=1$, then 
\begin{equation*}
	T_1 = \frac{1}{2}
	\begin{bmatrix}
		1 & 0 & 0 & -1\\
		0 & 0 & 0 & 0\\
		0 & 0 & 0 & 0\\
		-1 & 0 & 0 & 1
	\end{bmatrix}
	,\quad
	T_2 = \frac{1}{2}
	\begin{bmatrix}
		0 & 0 & 0 & 0\\
		0 & 1 & 1 & 0\\
		0 & 1 & 1 & 0\\
		0 & 0 & 0 & 0
	\end{bmatrix}.
\end{equation*}
When a measurement class is $\cM(\cK_s)$, 
Eq.~\eqref{S1} turns to 
\begin{equation}
	(1-\alpha_1)(1-\alpha_2) \le t\alpha_1\alpha_2, \label{S2}
\end{equation}
where $s=\sqrt{t}/(1+t)$ and $t\in[0,1]$.

A simple calculation ensures that the above measurements indeed discriminate the states \eqref{states} perfectly.
Thus, we only have to examine whether the above measurements are contained in $\cM_s$ and $\cM(\cK_s)$.
For details, see supplemental material.

Next, let us consider the general case $\dim\cH_A,\dim\cH_B\ge2$.
Let $\rho_1=\rho_1^A\otimes\rho_1^B$ and $\rho_2=\rho_2^A\otimes\rho_2^B$ be separable pure states.
We can take orthonormal bases of $\cH_A$ and $\cH_B$ such that 
$\rho_1$ and $\rho_2$ are expressed as \eqref{states}, 
i.e., their representation matrices are given by the direct sums of the matrices \eqref{states} and the zero matrix.
Therefore, the general case is reduced to the case $\dim\cH_A=\dim\cH_B=2$, 
and we obtain the following theorems.

\begin{theorem}[Perfect discrimination with $\cM_s$]\label{thm1}
	If $x \coloneqq \Tr\rho_1^A\rho_2^A$ and $y \coloneqq \Tr\rho_1^B\rho_2^B$ satisfy 
	the relations $s\in[0,1/2]$ and 
	\begin{equation}
		xy \le 4s^2(1-x)(1-y), \label{S1'}
	\end{equation}
	then $\rho_1$ and $\rho_2$ are perfectly distinguishable 
	by some measurement of $\cM_s$.
\end{theorem}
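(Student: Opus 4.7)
The strategy is to reduce the general-dimensional case to the two-qubit case handled just above, where an explicit measurement $\{T_i+\Gamma(T_i)\}_{i=1,2}\in\cM_s$ perfectly discriminates the states \eqref{states} whenever \eqref{S1} holds.

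\emph{Reduction to a two-qubit block.} Write $\rho_i^A=\ketbra{\psi_i^A}{\psi_i^A}$ and $\rho_i^B=\ketbra{\psi_i^B}{\psi_i^B}$, and let $\cH_A'\subseteq\cH_A$ be the span of $\ket{\psi_1^A},\ket{\psi_2^A}$, and similarly $\cH_B'$. I choose orthonormal bases $\{\ket{0}_A,\ket{1}_A\}$ of $\cH_A'$ with phases adjusted so that $\ket{\psi_1^A}=\ket{0}_A$ and $\ket{\psi_2^A}=\sqrt{1-\alpha_1}\ket{0}_A+\sqrt{\alpha_1}\ket{1}_A$ with $\alpha_1\coloneqq 1-x$; the same construction on Bob's side gives $\alpha_2=1-y$. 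Extending these to full orthonormal bases of $\cH_A,\cH_B$ fixes the partial transpose $\Gamma$, and the restrictions of $\rho_1,\rho_2$ to the $4$-dimensional block $\cH_A'\otimes\cH_B'$ take exactly the form \eqref{states}. The hypothesis $xy\le 4s^2(1-x)(1-y)$ translates to $(1-\alpha_1)(1-\alpha_2)\le 4s^2\alpha_1\alpha_2$, which combined with $s\le 1/2$ also forces $\alpha_1+\alpha_2\ge 1$, placing us inside the regime $\gamma\ge 1$ covered explicitly before the theorem.

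\emph{Padding the measurement.} The two-qubit construction yields $\tilde T_j\in\cT_+(AB)$ supported on $\cH_A'\otimes\cH_B'$ such that $\tilde M_j\coloneqq \tilde T_j+\Gamma(\tilde T_j)$ satisfy $\tilde M_1+\tilde M_2=P$ (the projector onto the 4-dimensional block), $\Tr(\rho_i\tilde M_j)=\delta_{ij}$, and $\nege(\tilde M_j)\le s$. I then define the full-space measurement by $M_1\coloneqq \tilde M_1$ and $M_2\coloneqq \tilde M_2+(I-P)$. Clearly $M_1+M_2=I$, and $(I-P)\rho_i=0$ preserves $\Tr(\rho_iM_j)=\delta_{ij}$. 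For membership in $\cM_s$, I use that $\Gamma(\tilde T_j)\in\mathrm{SEP}(A;B)^\ast$ because $\Tr(\Gamma(\tilde T_j)\sigma)=\Tr(\tilde T_j\Gamma(\sigma))\ge 0$ for every separable $\sigma$ (since $\Gamma(\sigma)$ is separable, hence positive); the cone $\mathrm{SEP}(A;B)^\ast$ is closed under addition and contains the positive operator $I-P$. Finally, adding a positive operator cannot create new negative eigenvalues (Weyl's inequality), so $\nege(M_j)\le \nege(\tilde M_j)\le s$.

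\emph{Main challenge and edge cases.} The principal verification is the one just given: that the full-space padding preserves both the $\mathrm{SEP}(A;B)^\ast$ membership and the bound on negative eigenvalues, which hinges on closure of the dual cone under sums and on Weyl's inequality. A secondary, purely algebraic point is the phase normalization in the first step that produces real, nonnegative $\beta_i=\sqrt{\alpha_i(1-\alpha_i)}$, handled by absorbing one phase into a global phase of $\ket{\psi_2^A}$ (which leaves $\rho_2^A$ invariant) and another into the definition of $\ket{1}_A$. The boundary cases $x=0$ or $y=0$ fall outside the two-qubit regime (the relevant subspace collapses to one dimension), but there a single POVM on the orthogonal factor already discriminates perfectly, and POVMs lie in $\cM_0\subseteq \cM_s$.
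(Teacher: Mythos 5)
Your proposal is correct and follows essentially the same route as the paper: reduce the general case to the explicit two-qubit measurement $\{T_i+\Gamma(T_i)\}_{i=1,2}$ for the states \eqref{states} (with $\alpha_1=1-x$, $\alpha_2=1-y$), with the padding by $I-P$, the $\mathrm{SEP}(A;B)^\ast$ membership, and the Weyl-type eigenvalue check spelled out where the paper leaves the embedding implicit. The only blemish is the parenthetical justification of the edge cases: at $x=0$ or $y=0$ the spans are still two-dimensional (the collapse to one dimension happens at $x=1$ or $y=1$, which \eqref{S1'} then forces into $y=0$ or $x=0$), so these cases are excluded only as the paper's ``trivial cases'' $\alpha_i=1$; your POVM fix for them is nevertheless valid.
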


\begin{theorem}[Perfect discrimination with $\cM(\cK_s)$]\label{thm2}
	If $x \coloneqq \Tr\rho_1^A\rho_2^A$ and $y \coloneqq \Tr\rho_1^B\rho_2^B$ satisfy 
	the relations $s=\sqrt{t}/(1+t)$, $t\in[0,1]$, and 
	\begin{equation}
		xy \le t(1-x)(1-y), \label{S2'}
	\end{equation}
	then $\rho_1$ and $\rho_2$ are perfectly distinguishable 
	by some measurement of $\cM(\cK_s)$.
\end{theorem}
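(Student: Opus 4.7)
\noindent\emph{Proof plan.}
The plan is to reuse the qubit reduction and the explicit measurement $\{T_i+\Gamma(T_i)\}_{i=1,2}$ already displayed above, then verify a single Schmidt-coefficient bound. First I would repeat the basis reduction from the discussion preceding Theorem~\ref{thm1}: choose orthonormal bases of $\cH_A$ and $\cH_B$ whose first two vectors span the supports of $\{\rho_1^A,\rho_2^A\}$ and $\{\rho_1^B,\rho_2^B\}$ respectively. In these bases the states take the form \eqref{states} with $\alpha_1=1-x$ and $\alpha_2=1-y$, so condition \eqref{S2'} is exactly \eqref{S2}. A two-outcome measurement built on the resulting $4$-dimensional block and extended by the zero operator outside it still lies in $\cK_s$, so it suffices to treat $\dim\cH_A=\dim\cH_B=2$. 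A short preliminary check rules out $\gamma:=\alpha_1+\alpha_2<1$: under $t\le 1$ it would give $(1-\alpha_1)(1-\alpha_2)>\alpha_1\alpha_2\ge t\alpha_1\alpha_2$, contradicting \eqref{S2}. Hence the two displayed branches $\gamma>1$ and $\gamma=1$ cover every nontrivial instance, and the identities $\Tr\rho_i M_j=\delta_{ij}$ for $M_i:=T_i+\Gamma(T_i)$ are the direct expansion the text calls a ``simple calculation''.

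The main step is to show $M_i\in\cK_s=\cT_+(AB)+\Gamma(\cK_s^{(0)})$ under \eqref{S2}. Each $T_i$ is a nonnegative combination of the rank-one projectors $\ketbra{v_k}{v_k}$, so $T_i\in\cT_+(AB)$ holds automatically; then $M_i\in\cK_s$ will follow from $T_i\in\cK_s^{(0)}$, which by convexity reduces to the pointwise bound $\sco(v_k)\le s$ for each contributing vector. The vectors $v_2$ and $v_3$ are product vectors, so $\sco(v_2)=\sco(v_3)=0$. The vector $v_1$ is already written in Schmidt form with Schmidt coefficients proportional to $1$ and $r:=\beta_1\beta_2/(\alpha_1\alpha_2)$, giving $\sco(v_1)=r/(1+r^2)$. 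In the $\gamma=1$ branch the matrices $T_1$ and $T_2$ are each a rank-one projector onto a vector with Schmidt coefficients $(1/\sqrt{2},1/\sqrt{2})$, so $\sco=1/2$; but \eqref{S2} at $\gamma=1$ reads $\alpha_1\alpha_2\le t\alpha_1\alpha_2$, forcing $t=1$ and $s=1/2$, and the bound is saturated.

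The crux is the single inequality $r/(1+r^2)\le s$ for the generic branch. Rewriting \eqref{S2} as $r^2=(1-\alpha_1)(1-\alpha_2)/(\alpha_1\alpha_2)\le t\le 1$, I would invoke that $f(r):=r/(1+r^2)$ has derivative $(1-r^2)/(1+r^2)^2\ge 0$ on $[0,1]$, so $\sco(v_1)=f(r)\le f(\sqrt{t})=\sqrt{t}/(1+t)=s$. The one step that requires genuine insight, rather than routine manipulation, is recognising that the Schmidt-product function $r\mapsto r/(1+r^2)$ inverts to the parametrisation $s=\sqrt{t}/(1+t)$ of the theorem; once this matching is observed the verification is immediate, and it is precisely this matching that dictated that parametrisation in the statement.
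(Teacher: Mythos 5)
Your proposal is correct and follows essentially the same route as the paper's own proof (Lemma~\ref{supp-mat-lem2} in the appendix): the same explicit measurement, the same reduction to $T_i\in\cK_s^{(0)}$ via $\sco(v_2)=\sco(v_3)=0$ and $\sco(v_1)=r/(1+r^2)\le\sqrt{t}/(1+t)=s$ using $r^2=(1-\alpha_1)(1-\alpha_2)/(\alpha_1\alpha_2)\le t$ and monotonicity, and the same degeneration at $\gamma=1$ forcing $t=1$, $s=1/2$. The steps you defer as routine --- the normalization $T_1+T_2+\Gamma(T_1+T_2)=I$ and the discrimination identities --- are precisely the computations the appendix carries out, with the one small caveat that in the reduction to the $4$-dimensional block you cannot extend both elements by zero (they would not sum to $I$ on the full space); you should instead add $I-P_A\otimes P_B$ to one element, which stays in $\cK_s$ and is orthogonal to both states, so nothing else changes.
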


Using Theorems~\ref{thm1} and \ref{thm2}, 
we find the following drastic improvement of multiple-copy state discrimination.
Let $\sigma_1$ and $\sigma_2$ be distinct pure states on a single quantum system.
In QT, the non-trivial $n$-copies $\sigma_1^{\otimes n}$ and $\sigma_2^{\otimes n}$ 
never be perfectly distinguishable, 
where we say that the $n$-copies $\sigma_1^{\otimes n}$ and $\sigma_2^{\otimes n}$ are \textit{non-trivial} 
if $\sigma_1$ and $\sigma_2$ are distinct and non-orthogonal.
However, it is known \cite{Arai2019} that, for some finite $n$, 
the non-trivial $n$-copies $\sigma_1^{\otimes n}$ and $\sigma_2^{\otimes n}$ 
are perfectly distinguishable by some measurement of $\cM(\mathrm{SEP}^\ast(A;B))$.
Surprisingly, the same statement is true 
for the measurement classes $\cM_s$ and $\cM(\cK_s)$ 
that are slightly larger than the class of POVMs.
To see this fact, regarding the $2n$-copies $\sigma_1^{\otimes2n} = \sigma_1^{\otimes n}\otimes\sigma_1^{\otimes n}$ 
and $\sigma_2^{\otimes2n} = \sigma_2^{\otimes n}\otimes\sigma_2^{\otimes n}$ as bipartite separable pure states, 
we apply Theorem~\ref{thm1} to them.
Assume $s\in(0,1/2]$. Since 
\[
x = y = \Tr\sigma_1^{\otimes n}\sigma_2^{\otimes n} = (\Tr\sigma_1\sigma_2)^n \xrightarrow{n\to\infty} 0,
\]
a sufficiently large $n$ satisfies the inequality \eqref{S1'}.
Thus, for some finite $n$, the $2n$-copies $\sigma_1^{\otimes2n}$ and $\sigma_2^{\otimes2n}$ 
are perfectly distinguishable by some measurement of $\cM_s$.
Also, since a sufficiently large $n$ satisfies the inequality \eqref{S2'}, 
the same statement is true for $\cM(\cK_s)$.
We summarize these facts as the following corollary and Table~\ref{table}.

\begin{corollary}[Multiple-copy state discrimination]\label{coro1}
	Assume $s\in(0,1/2]$. 
	Then, for some finite $n\in\mathbb{N}$, 
	the $2n$-copies $\sigma_1^{\otimes2n} = \sigma_1^{\otimes n}\otimes\sigma_1^{\otimes n}$ 
	and $\sigma_2^{\otimes2n} = \sigma_2^{\otimes n}\otimes\sigma_2^{\otimes n}$ 
	are perfectly distinguishable by some measurement of $\cM_s$.
	The same statement is true for $\cM(\cK_s)$.
\end{corollary}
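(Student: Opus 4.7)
The plan is to apply Theorem~\ref{thm1} and Theorem~\ref{thm2} to a suitable bipartite cut of the $2n$-copy states. I would split the $2n$ copies into Alice's first $n$ and Bob's last $n$, writing $\sigma_i^{\otimes 2n} = \sigma_i^{\otimes n} \otimes \sigma_i^{\otimes n}$. Each factor is a pure state, so on this bipartite system $\sigma_i^{\otimes 2n}$ is a separable pure state of the form $\rho_i^A \otimes \rho_i^B$ with $\rho_i^A = \rho_i^B = \sigma_i^{\otimes n}$, to which Theorems~\ref{thm1} and \ref{thm2} apply directly.

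Next, I would compute the overlaps entering the hypotheses of those theorems. Multiplicativity of the Hilbert--Schmidt inner product over tensor products gives
\[
x = \Tr \rho_1^A \rho_2^A = (\Tr \sigma_1 \sigma_2)^n = \Tr \rho_1^B \rho_2^B = y,
\]
so with $c := \Tr \sigma_1 \sigma_2$ we have $x = y = c^n$. Since $\sigma_1$ and $\sigma_2$ are distinct pure states, Cauchy--Schwarz yields $c \in [0,1)$, hence $x = y = c^n \to 0$ as $n \to \infty$.

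The final step is to verify that the sufficient conditions of the two theorems hold for all large enough $n$. For $\cM_s$, the condition~\eqref{S1'} becomes $c^{2n} \le 4 s^2 (1 - c^n)^2$; the right-hand side tends to $4s^2 > 0$ while the left-hand side tends to $0$, so the inequality is valid for all sufficiently large $n$, and Theorem~\ref{thm1} produces the discriminating measurement in $\cM_s$. For $\cM(\cK_s)$, the condition~\eqref{S2'} becomes $c^{2n} \le t(1-c^n)^2$ with $t \in (0,1]$ fixed by $s = \sqrt{t}/(1+t)$; the same asymptotic comparison gives a finite $n$ that works, and Theorem~\ref{thm2} concludes.

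There is essentially no genuine obstacle here: the corollary is a routine consequence of Theorems~\ref{thm1} and \ref{thm2} once the bipartite splitting is chosen. The only conceptual point worth noting is the choice of cut — one must place $n$ copies on each side (rather than using an unbalanced split) so that both overlaps $x$ and $y$ shrink simultaneously at the common geometric rate $c^n$, which is precisely what makes the product $xy = c^{2n}$ decay fast enough to satisfy the sufficient conditions while $s>0$ is held fixed.
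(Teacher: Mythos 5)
Your proposal is correct and follows essentially the same route as the paper: regard the $2n$ copies as a bipartite separable pure state with $\rho_i^A=\rho_i^B=\sigma_i^{\otimes n}$, note $x=y=(\Tr\sigma_1\sigma_2)^n\to0$, and invoke Theorems~\ref{thm1} and \ref{thm2} for sufficiently large $n$. The extra details you supply (Cauchy--Schwarz to get $\Tr\sigma_1\sigma_2<1$ and the explicit form of the inequalities) are consistent with, and slightly more explicit than, the paper's argument.
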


\begin{table*}
	\centering
	\caption{Finite-copy perfect discrimination for each measurement class.
	Assume $s\in(0,1/2]$ here.}\label{table}
	\vspace{1ex}
	\begin{tabular}{ccccc}
		\hline
		Measurement class&$\cM(\cT_+(AB))$&$\cM_s$&$\cM(\cK_s)$&$\cM(\mathrm{SEP}(A;B)^\ast)$\\
		\hline
		Perfect discrimination&\multirow{2}{*}{Impossible}&Possible&Possible&Possible\\
		of non-trivial $n$-copies&   &for finite $n$&for finite $n$&for finite $n$\\
		\hline
	\end{tabular}
\end{table*}

The domains of $(x,y)$ in Theorems~\ref{thm1} and \ref{thm2} are illustrated as Fig.~\ref{fig:xy}.
Once the parameter $s\in[0,1/2]$ decreases, 
the domains of $(x,y)$ in Theorems~\ref{thm1} and \ref{thm2} becomes smaller.
However, the origin is an interior point of the domain of $(x,y)$ 
(as a subspace of the square $[0,1]^2$) unless $s$ is zero.
This fact is important to understand Corollary~\ref{coro1}.
To see this importance, recall that 
the value $x_n=y_n=(\Tr\sigma_1\sigma_2)^n$ converges to zero as $n\to\infty$.
As long as the origin is an interior point of the domain of $(x,y)$, 
for some finite $n$ the point $(x_n,y_n)$ lies in the domain of $(x,y)$.
Since the origin is an interior point of the domains of $(x,y)$ in Theorems~\ref{thm1} and \ref{thm2}, 
we can check Corollary~\ref{coro1} again.

\begin{figure}
	\centering
	\includegraphics[scale=0.3]{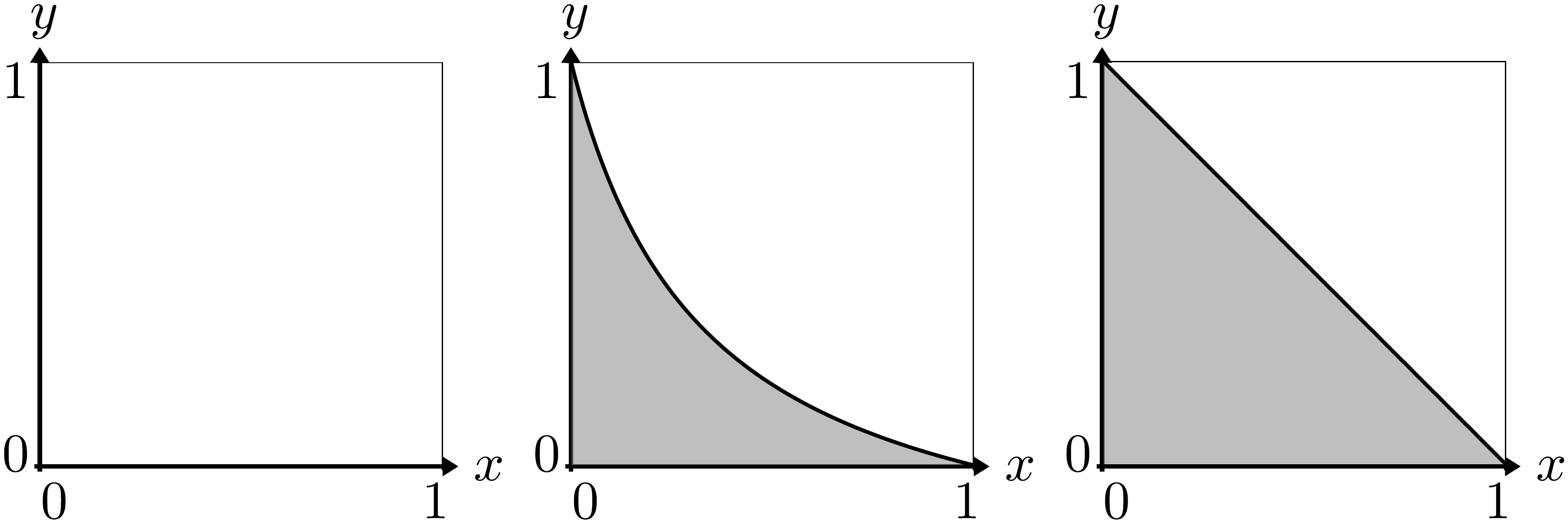}
	\caption{The domains of $(x,y)$ in Theorem~\ref{thm1} (resp.\ Theorem~\ref{thm2}) 
	for $s=0,1/4,1/2$ (resp.\ $(s,t)=(0,0),(2/5,1/4),(1/2,1)$).}
	\label{fig:xy}
\end{figure}

\section{Conclusion}\label{sect5}
To investigate the performance of perfect discrimination in models approximately close to QT, 
we have defined the two measurement classes $\cM_s$ and $\cM(\cK_s)$ 
that are smoothly connected with the class of POVMs ($s=0$).
As a result, unless $s=0$, 
the performance of perfect discrimination is drastically improved 
for both the measurement classes $\cM_s$ and $\cM(\cK_s)$.
More precisely, their measurements enable us 
to discriminate non-trivial $2n$-copies perfectly for some finite $n$.
This result suggests that the approximate measurement classes $\cM_s$ and $\cM(\cK_s)$ are unlikely to exist.

Although we have shown perfect discrimination of non-trivial $2n$-copies for some finite $n$, 
it is interesting to investigate the converse: 
\textit{Let $\cM$ be a measurement class.
If there never exist $n\in\mathbb{N}$ and non-trivial $n$-copies such that 
the $n$-copies are perfectly distinguishable by some measurement of $\cM$, 
then is $\cM$ contained in POVMs?} It is a future work.
Also, to consider another problem, 
assume that there exist pure states $\sigma_1$ and $\sigma_2$ such that 
for every $n\in\mathbb{N}$ the $n$-copies $\sigma_1^{\otimes n}$ and $\sigma_2^{\otimes n}$ 
are not perfectly distinguishable by any measurements of $\cM$.
Then it is also interesting to examine the error probability 
in discriminating $\sigma_1^{\otimes n}$ and $\sigma_2^{\otimes n}$.
If $\cM$ is the class of POVMs, 
the error probability is exponentially decreasing 
and the exponential decreasing rate is known \cite[Section~3]{HayashiBook:2017}.
However, we are interested in the case where $\cM$ is general.
It is another future work.

\ack
YY was supported by Japan Society for the Promotion of Science (JSPS) 
Grant-in-Aid for JSPS Fellows No.\ 19J20161. 
MH was supported in part by JSPS Grant-in-Aids for 
Scientific Research (A) No.\ 17H01280 and for Scientific Research (B) No.\ 16KT0017, 
and Kayamori Foundation of Information Science Advancement.

\section*{Appendix: Proofs of technical lemmas}
In this appendix, we prove technical lemmas, 
which yield Theorems~\ref{thm1} and \ref{thm2}.

\begin{lemma}[Perfect discrimination with $\cM_s$]\label{supp-mat-lem1}
	Let $\rho_1$ and $\rho_2$ be the separable pure states given as 
	\begin{equation}
		\rho_1 = 
		\begin{bmatrix}
			1&0\\
			0&0
		\end{bmatrix}
		\otimes
		\begin{bmatrix}
			1&0\\
			0&0
		\end{bmatrix}
		,\quad \rho_2 = 
		\begin{bmatrix}
			1-\alpha_1&\beta_1\\
			\beta_1&\alpha_1
		\end{bmatrix}
		\otimes
		\begin{bmatrix}
			1-\alpha_2&\beta_2\\
			\beta_2&\alpha_2
		\end{bmatrix}, \label{supp-mat-states}
	\end{equation}
	where $\alpha_1,\alpha_2\in[0,1]$ and $\beta_i = \sqrt{\alpha_i(1-\alpha_i)}$.
	If the relations $s\in[0,1/2]$ and 
	\begin{equation}
		(1-\alpha_1)(1-\alpha_2) \le 4s^2\alpha_1\alpha_2 \label{supp-mat-S1}
	\end{equation}
	hold, then $\rho_1$ and $\rho_2$ are perfectly distinguishable 
	by some measurement $\{T_i+\Gamma(T_i)\}_{i=1,2}\in\cM_s$.
	The measurement $\{T_i+\Gamma(T_i)\}_{i=1,2}\in\cM_s$ is given below 
	except for the trivial cases $\alpha_1=1$ and $\alpha_2=1$:
	If $\gamma \coloneqq \alpha_1+\alpha_2>1$, then 
	\begin{gather}
		2\gamma T_1 = \gamma\ketbra{v_1}{v_1} + (\gamma-1)\ketbra{v_2}{v_2} + (\gamma-1)\ketbra{v_3}{v_3},\label{supp-mat-eq01}\\
		v_1 =
		\begin{bmatrix}
			1\\
			0
		\end{bmatrix}
		\otimes
		\begin{bmatrix}
			1\\
			0
		\end{bmatrix}
		- \frac{\beta_1\beta_2}{\alpha_1\alpha_2}
		\begin{bmatrix}
			0\\
			1
		\end{bmatrix}
		\otimes
		\begin{bmatrix}
			0\\
			1
		\end{bmatrix}
		,\label{supp-mat-eq01'}\\
		v_2 =
		\begin{bmatrix}
			1\\
			-\beta_1/\alpha_1
		\end{bmatrix}
		\otimes
		\begin{bmatrix}
			0\\
			1\\
		\end{bmatrix}
		,\quad
		v_3 =
		\begin{bmatrix}
			0\\
			1\\
		\end{bmatrix}
		\otimes
		\begin{bmatrix}
			1\\
			-\beta_2/\alpha_2
		\end{bmatrix}
		,\label{supp-mat-eq01''}\\
		T_2=(U_A\otimes U_B)T_1(U_A\otimes U_B)^\dag, \label{supp-mat-local-uni}\\
		U_A = \frac{1}{\sqrt{\alpha_1}}
		\begin{bmatrix}
			\beta_1&\alpha_1\\
			\alpha_1&-\beta_1
		\end{bmatrix}
		,\quad
		U_B = \frac{1}{\sqrt{\alpha_2}}
		\begin{bmatrix}
			\beta_2&\alpha_2\\
			\alpha_2&-\beta_2
		\end{bmatrix};
	\end{gather}
	if $\gamma=1$, then 
	\begin{equation*}
		T_1 = \frac{1}{2}
		\begin{bmatrix}
			1 & 0 & 0 & -1\\
			0 & 0 & 0 & 0\\
			0 & 0 & 0 & 0\\
			-1 & 0 & 0 & 1
		\end{bmatrix}
		,\quad
		T_2 = \frac{1}{2}
		\begin{bmatrix}
			0 & 0 & 0 & 0\\
			0 & 1 & 1 & 0\\
			0 & 1 & 1 & 0\\
			0 & 0 & 0 & 0
		\end{bmatrix}.
	\end{equation*}
\end{lemma}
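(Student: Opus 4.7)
The plan is to verify three properties of the candidate family $\{T_i+\Gamma(T_i)\}_{i=1,2}$: (a) it sums to the identity, (b) it discriminates $\rho_1$ and $\rho_2$ perfectly, and (c) every element lies in $\mathrm{SEP}(A;B)^\ast$ and has negativity at most $s$. All three are direct computations once one reads off the algebraic structure of the construction from \eqref{supp-mat-eq01}--\eqref{supp-mat-local-uni}; the nontrivial part is (c), where the hypothesis \eqref{supp-mat-S1} must emerge tightly. A useful preliminary observation is that \eqref{supp-mat-S1} together with $s\le 1/2$ already forces $\gamma\ge 1$, so the cases $\gamma>1$ and $\gamma=1$ given in the statement exhaust what one must handle.

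For (a) and (b), the key fact is that $\rho_1$ and $\rho_2$ both have real entries and so are fixed by $\Gamma$; hence $\Tr[\rho_i\,\Gamma(T_j)]=\Tr[\rho_i T_j]$ and it suffices to show $\Tr[\rho_i T_j]=\delta_{ij}/2$. In the case $\gamma>1$ the vectors $v_2$ and $v_3$ in \eqref{supp-mat-eq01''} are product vectors whose second (resp.\ first) factor is orthogonal to the corresponding factor of $\rho_2$, while the coefficient $-\beta_1\beta_2/(\alpha_1\alpha_2)$ in \eqref{supp-mat-eq01'} is engineered so that $v_1\perp\rho_2$ as well; hence $\Tr[\rho_2 T_1]=0$ termwise, and $\Tr[\rho_1 T_1]=1/2$ falls out of the $(1,1)$-entries of \eqref{supp-mat-eq01}. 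A short computation shows $U_A^2=U_B^2=I$ and $(U_A\otimes U_B)\rho_1(U_A\otimes U_B)^\dagger=\rho_2$, so \eqref{supp-mat-local-uni} immediately gives $\Tr[\rho_2 T_2]=\Tr[\rho_1 T_1]=1/2$ and $\Tr[\rho_1 T_2]=\Tr[\rho_2 T_1]=0$. Completeness then reduces to the single identity $S+\Gamma(S)=I$ with $S:=T_1+T_2$, which I would verify entry by entry; the $\gamma=1$ branch is clear from the explicit matrices.

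The negativity bound is the main step. For any $X\in\mathrm{SEP}(A;B)$ both $X$ and $\Gamma(X)$ are positive semidefinite, so $\langle X,\,T_i+\Gamma(T_i)\rangle=\langle X,T_i\rangle+\langle\Gamma(X),T_i\rangle\ge 0$ as soon as $T_i\succeq 0$, and the latter is immediate from \eqref{supp-mat-eq01} since the coefficients $\gamma$ and $\gamma-1$ are nonnegative when $\gamma\ge 1$. Thus $T_i+\Gamma(T_i)\in\mathrm{SEP}(A;B)^\ast$. Next, because $T_i\succeq 0$, Weyl monotonicity gives $\nege(T_i+\Gamma(T_i))\le\nege(\Gamma(T_i))$. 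The payoff of choosing $v_2$ and $v_3$ to be product vectors is that $\Gamma(\ketbra{v_k}{v_k})\succeq 0$ for $k=2,3$, so all the negativity is concentrated in the single term $\tfrac12\Gamma(\ketbra{v_1}{v_1})$. Reading off the Schmidt coefficients of $v_1$ directly from \eqref{supp-mat-eq01'} and invoking \eqref{sc-neg} yields $\nege(\Gamma(\ketbra{v_1}{v_1}))=\beta_1\beta_2/(\alpha_1\alpha_2)$, whence $\nege(T_1+\Gamma(T_1))\le\beta_1\beta_2/(2\alpha_1\alpha_2)$; squaring and using $\beta_i^2=\alpha_i(1-\alpha_i)$ reproduces precisely \eqref{supp-mat-S1}. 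Because $U_A$ and $U_B$ are real symmetric, $\Gamma$ commutes with conjugation by $U_A\otimes U_B$, so $T_2+\Gamma(T_2)$ is unitarily equivalent to $T_1+\Gamma(T_1)$ and inherits the same bound.

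The delicate point throughout is this last step: the construction is tailored so that two of the three rank-one terms in $T_1$ have Schmidt rank one and contribute no negativity under $\Gamma$, concentrating everything into $\Gamma(\ketbra{v_1}{v_1})$, whose Schmidt arithmetic then matches \eqref{supp-mat-S1} on the nose. The $\gamma=1$ case must be handled separately from the explicit matrix form: there $T_1$ is proportional to the projector onto a maximally entangled state, so $\nege(\Gamma(T_1))=1/2$, forcing $s=1/2$, which is exactly the boundary at which \eqref{supp-mat-S1} is compatible with $\gamma=1$.
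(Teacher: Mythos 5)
Your proposal is correct and follows essentially the same route as the paper's appendix proof: termwise orthogonality of $v_1,v_2,v_3$ to $\rho_2$ plus the local-unitary covariance $T_2=(U_A\otimes U_B)T_1(U_A\otimes U_B)^\dag$ for the discrimination conditions, positivity of $T_i$ for membership in $\mathrm{SEP}(A;B)^\ast$, and Weyl monotonicity combined with the Schmidt arithmetic of $v_1$ via \eqref{sc-neg} (the product vectors $v_2,v_3$ contributing no negativity) to get $\nege(T_i+\Gamma(T_i))\le\beta_1\beta_2/(2\alpha_1\alpha_2)\le s$, with the $\gamma=1$ branch forcing $s=1/2$ exactly as in the paper. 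The only caveats are that the completeness identity $T_1+T_2+\Gamma(T_1+T_2)=I$ for $\gamma>1$ is asserted rather than carried out (it is the bulk of the paper's computation, and it does come out as claimed), and a harmless slip: for $v_2$ it is the \emph{first} tensor factor (and for $v_3$ the second) that is orthogonal to the corresponding factor of $\rho_2$.
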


\begin{lemma}[Perfect discrimination with $\cM(\cK_s)$]\label{supp-mat-lem2}
	Let $\rho_1$ and $\rho_2$ be the separable pure states given as \eqref{supp-mat-states}.
	If the relations $s=\sqrt{t}/(1+t)$, $t\in[0,1]$, and 
	\begin{equation}
		(1-\alpha_1)(1-\alpha_2) \le t\alpha_1\alpha_2, \label{supp-mat-S2}
	\end{equation}
	hold, then $\rho_1$ and $\rho_2$ are perfectly distinguishable 
	by the measurement $\{T_i+\Gamma(T_i)\}_{i=1,2}\in\cM_s$ given in Lemma~$\ref{supp-mat-lem1}$.
\end{lemma}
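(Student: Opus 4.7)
The plan is to exploit the fact that Lemma~\ref{supp-mat-lem2} concerns exactly the same effects $\{T_i+\Gamma(T_i)\}_{i=1,2}$ already produced in Lemma~\ref{supp-mat-lem1}. The perfect-discrimination identities $\langle\rho_i,\,T_j+\Gamma(T_j)\rangle=\delta_{ij}$ and the normalization $\sum_j(T_j+\Gamma(T_j))=I$ were established there as algebraic facts independent of the ambient measurement class, and thus carry over verbatim. The only genuinely new task is to certify that each effect belongs to $\cK_s=\cT_+(AB)+\Gamma(\cK_s^{(0)})$.

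The effects are already presented in the right shape: writing $T_j+\Gamma(T_j)$ as $A_j+\Gamma(B_j)$ with the natural choice $A_j=B_j=T_j$ reduces the whole task to showing $T_j\in\cK_s^{(0)}$; the inclusion $\cK_s^{(0)}\subseteq\cT_+(AB)$ then simultaneously supplies the positive-semidefinite summand for free. Because $\cK_s^{(0)}$ is the conic hull of the rank-one projectors $\ketbra{v}{v}$ with $\sco(v)\le s$, it suffices to check this bound for each vector appearing in the displayed decomposition of $T_j$.

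In the generic case $\gamma=\alpha_1+\alpha_2>1$, the vectors $v_2$ and $v_3$ are product, so $\sco(v_2)=\sco(v_3)=0$ automatically. The only non-trivial vector is $v_1=\ket{00}-a\ket{11}$ with $a=\beta_1\beta_2/(\alpha_1\alpha_2)$; this expression is already in Schmidt form, giving $\sco(v_1)=a/(1+a^2)$. Since $a^2=(1-\alpha_1)(1-\alpha_2)/(\alpha_1\alpha_2)$, the hypothesis \eqref{supp-mat-S2} is equivalent to $a^2\le t\le 1$, hence $a\le\sqrt{t}\le 1$. The function $x\mapsto x/(1+x^2)$ is monotonically increasing on $[0,1]$, which yields $\sco(v_1)\le\sqrt{t}/(1+t)=s$. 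The local-unitary invariance of $\cK_s^{(0)}$ noted in the excerpt transfers the inclusion from $T_1$ to $T_2=(U_A\otimes U_B)T_1(U_A\otimes U_B)^\dag$. In the borderline case $\gamma=1$ the hypothesis forces $t=1$ and $s=1/2$, and the two rank-one vectors $\ket{00}-\ket{11}$ and $\ket{01}+\ket{10}$ visible in the explicit $T_1$ and $T_2$ both have $\sco=1/2$, placing them in $\cK_{1/2}^{(0)}$.

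The main obstacle, modest but indispensable, is the Schmidt-coefficient computation for $v_1$ together with the monotonicity argument that converts the hypothesis $a^2\le t$ into $\sco(v_1)\le s$. The parameterization $s=\sqrt{t}/(1+t)$ chosen in the statement is calibrated precisely so that this conversion is tight, which is what makes the particular form of the hypothesis \eqref{supp-mat-S2} both natural and sufficient for the chosen measurement to lie in $\cM(\cK_s)$.
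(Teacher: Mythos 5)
Your proof is correct and takes essentially the same approach as the paper: the genuinely new content is exactly the verification that $T_1,T_2\in\cK_s^{(0)}$, carried out as in the paper via $\sco(v_1)=a/(1+a^2)\le\sqrt{t}/(1+t)=s$ with $a=\beta_1\beta_2/(\alpha_1\alpha_2)$ and $a^2\le t$, the product structure of $v_2,v_3$, local-unitary invariance for $T_2$, and the degenerate case $\gamma=1$ forcing $t=1$, $s=1/2$. The only difference is bookkeeping: the paper proves the normalization and orthogonality identities inside this lemma's proof (and has Lemma~\ref{supp-mat-lem1} cite them), whereas you import them from Lemma~\ref{supp-mat-lem1}, which is legitimate since they are algebraic properties of the operators alone and the hypotheses here imply those of Lemma~\ref{supp-mat-lem1}.
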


We prove Lemmas~\ref{supp-mat-lem2} and \ref{supp-mat-lem1} in this order.

\begin{proof}[Proof of Lemma~$\ref{supp-mat-lem2}$]
	Assume that $s=\sqrt{t}/(1+t)$, $t\in[0,1]$, and \eqref{supp-mat-S2}.
	All we need is to show that 
	\begin{enumerate}
		\item
		$T_1+T_2 + \Gamma(T_1+T_2) = I$,
		\item
		$T_i\in\cK_s^{(0)}$ for all $i=1,2$,
		\item
		$\Tr\rho_1T_2 = \Tr\rho_2T_1 = 0$.
	\end{enumerate}
	Indeed, if (i) and (ii) hold, then $\{T_i+\Gamma(T_i)\}_{i=1,2}\in\cM(\cK_s)$.
	Also, if (i) and (iii) hold, 
	then the equations $\Gamma(\rho_i)=\rho_i$, $i=1,2$, imply that 
	$\Tr\rho_i(T_j+\Gamma(T_j)) = 2\Tr\rho_iT_j = \delta_{ij}$ for all $i,j\in\{1,2\}$.
	Therefore, if (i)--(iii) hold, then Lemma~\ref{supp-mat-lem1} follows.
	Also, note that  
	$(1-\alpha_2)(1-\alpha_1) \le t\alpha_1\alpha_2 \le \alpha_1\alpha_2$ 
	thanks to $t\in[0,1]$ and \eqref{supp-mat-S2}.
	Thus $\gamma=\alpha_1+\alpha_2\ge1$.
	If $\alpha_1\alpha_2=0$, then $\alpha_1=1$ or $\alpha_2=1$, 
	which is a trivial case.
	Therefore, without loss of generality, we may assume $\alpha_1\alpha_2>0$.
	\par
	Proof of (i). First, assume $\gamma=1$.
	Then 
	\[
	T_1+T_2 + \Gamma(T_1+T_2) = \frac{1}{2}
	\begin{bmatrix}
		1&0&0&-1\\
		0&1&1&0\\
		0&1&1&0\\
		-1&0&0&1\\
	\end{bmatrix}
	+ \frac{1}{2}
	\begin{bmatrix}
		1&0&0&1\\
		0&1&-1&0\\
		0&-1&1&0\\
		1&0&0&1\\
	\end{bmatrix}
	= I.
	\]
	Next, assume $\gamma>1$. Put $w_i=(U_A\otimes U_B)v_i$ for $i=1,2,3$.
	Then $w_i$, $i=1,2,3$, can be calculated as follows: 
	\begin{align*}
		w_1 &= \frac{1}{\sqrt{\alpha_1\alpha_2}}\biggl(
		\begin{bmatrix}
			\beta_1\\
			\alpha_1
		\end{bmatrix}
		\otimes
		\begin{bmatrix}
			\beta_2\\
			\alpha_2
		\end{bmatrix}
		- \frac{\beta_1\beta_2}{\alpha_1\alpha_2}
		\begin{bmatrix}
			\alpha_1\\
			-\beta_1
		\end{bmatrix}
		\otimes
		\begin{bmatrix}
			\alpha_2\\
			-\beta_2
		\end{bmatrix}
		\biggr)\\
		&= 
		\frac{1}{\sqrt{\alpha_1\alpha_2}}\Biggl(
		\begin{bmatrix}
			\beta_1\beta_2\\
			\beta_1\alpha_2\\
			\alpha_1\beta_2\\
			\alpha_1\alpha_2
		\end{bmatrix}
		- \frac{\beta_1\beta_2}{\alpha_1\alpha_2}
		\begin{bmatrix}
			\alpha_1\alpha_2\\
			-\alpha_1\beta_2\\
			-\beta_1\alpha_2\\
			\beta_1\beta_2
		\end{bmatrix}
		\Biggr)
		= \frac{1}{\sqrt{\alpha_1\alpha_2}}
		\begin{bmatrix}
			0\\
			\beta_1\\
			\beta_2\\
			\gamma-1
		\end{bmatrix},
	\end{align*}
	\begin{equation*}
		w_2 = \frac{1}{\sqrt{\alpha_1\alpha_2}}
		\begin{bmatrix}
			0\\
			1
		\end{bmatrix}
		\otimes
		\begin{bmatrix}
			\alpha_2\\
			-\beta_2
		\end{bmatrix}
		= \sqrt{\frac{\alpha_2}{\alpha_1}}v_3,\quad
		w_3 = \frac{1}{\sqrt{\alpha_1\alpha_2}}
		\begin{bmatrix}
			\alpha_1\\
			-\beta_1
		\end{bmatrix}
		\otimes
		\begin{bmatrix}
			0\\
			1
		\end{bmatrix}
		= \sqrt{\frac{\alpha_1}{\alpha_2}}v_2.
	\end{equation*}
	Thus, putting $\xi=\beta_1\beta_2/\alpha_1\alpha_2$, we have 
	\begin{align*}
		&\quad T_1+T_2 = \frac{1}{2}(\ketbra{v_1}{v_1} + \ketbra{w_1}{w_1})
		+ \frac{\gamma-1}{2\gamma}(\ketbra{v_2}{v_2} + \ketbra{v_3}{v_3} + \ketbra{w_2}{w_2} + \ketbra{w_3}{w_3})\\
		&= \frac{1}{2}(\ketbra{v_1}{v_1} + \ketbra{w_1}{w_1})
		+ \frac{\gamma-1}{2\gamma}\Bigl( \ketbra{v_2}{v_2} + \ketbra{v_3}{v_3}
		+ \frac{\alpha_2}{\alpha_1}\ketbra{v_3}{v_3} + \frac{\alpha_1}{\alpha_2}\ketbra{v_2}{v_2} \Bigr)\\
		&= \frac{1}{2}(\ketbra{v_1}{v_1} + \ketbra{w_1}{w_1})
		+ \frac{\gamma-1}{2}\Bigl( \frac{1}{\alpha_2}\ketbra{v_2}{v_2} + \frac{1}{\alpha_1}\ketbra{v_3}{v_3} \Bigr)\\
		&= \frac{1}{2}
		\begin{bmatrix}
			1&0&0&-\xi\\
			0&0&0&0\\
			0&0&0&0\\
			-\xi&0&0&\xi^2
		\end{bmatrix}
		+ \frac{1}{2\alpha_1\alpha_2}
		\begin{bmatrix}
			0&0&0&0\\
			0&\beta_1^2&\beta_1\beta_2&(\gamma-1)\beta_1\\
			0&\beta_1\beta_2&\beta_2^2&(\gamma-1)\beta_2\\
			0&(\gamma-1)\beta_1&(\gamma-1)\beta_2&(\gamma-1)^2
		\end{bmatrix}
		\\
		&\quad+ \frac{\gamma-1}{2\alpha_1\alpha_2}\biggl(
		\begin{bmatrix}
			\alpha_1&-\beta_1\\
			-\beta_1&1-\alpha_1
		\end{bmatrix}
		\otimes
		\begin{bmatrix}
			0&0\\
			0&1
		\end{bmatrix}
		+
		\begin{bmatrix}
			0&0\\
			0&1
		\end{bmatrix}
		\otimes
		\begin{bmatrix}
			\alpha_2&-\beta_2\\
			-\beta_2&1-\alpha_2
		\end{bmatrix}
		\biggr)\\
		&= \frac{1}{2}
		\begin{bmatrix}
			1&0&0&-\xi\\
			0&0&0&0\\
			0&0&0&0\\
			-\xi&0&0&\xi^2
		\end{bmatrix}
		+ \frac{1}{2\alpha_1\alpha_2}
		\begin{bmatrix}
			0&0&0&0\\
			0&\beta_1^2&\beta_1\beta_2&(\gamma-1)\beta_1\\
			0&\beta_1\beta_2&\beta_2^2&(\gamma-1)\beta_2\\
			0&(\gamma-1)\beta_1&(\gamma-1)\beta_2&(\gamma-1)^2
		\end{bmatrix}
		\\
		&\quad+ \frac{\gamma-1}{2\alpha_1\alpha_2}
		\begin{bmatrix}
			0&0&0&0\\
			0&\alpha_1&0&-\beta_1\\
			0&0&0&0\\
			0&-\beta_1&0&1-\alpha_1
		\end{bmatrix}
		+ \frac{\gamma-1}{2\alpha_1\alpha_2}
		\begin{bmatrix}
			0&0&0&0\\
			0&0&0&0\\
			0&0&\alpha_2&-\beta_2\\
			0&0&-\beta_2&1-\alpha_2
		\end{bmatrix}.
	\end{align*}
	When $t_{ij}$ denotes the $(i,j)$-th entry of $T_1+T_2$, 
	it follows that $t_{11}=1/2$, $t_{12}=t_{21}=t_{13}=t_{31}=0$, $t_{14}=t_{41}=-\xi/2$, 
	$t_{23}=t_{32}=\xi/2$, 
	\begin{equation*}
		t_{24} = t_{42} = \frac{(\gamma-1)\beta_1}{2\alpha_1\alpha_2} - \frac{(\gamma-1)\beta_1}{2\alpha_1\alpha_2} = 0,\quad
		t_{34} = t_{43} = \frac{(\gamma-1)\beta_2}{2\alpha_1\alpha_2} - \frac{(\gamma-1)\beta_2}{2\alpha_1\alpha_2} = 0,
	\end{equation*}
	\begin{align*}
		t_{22} &= \frac{\beta_1^2}{2\alpha_1\alpha_2} + \frac{(\gamma-1)\alpha_1}{2\alpha_1\alpha_2}
		= \frac{1-\alpha_1}{2\alpha_2} + \frac{\gamma-1}{2\alpha_2} = 1/2,\\
		t_{33} &= \frac{\beta_2^2}{2\alpha_1\alpha_2} + \frac{(\gamma-1)\alpha_2}{2\alpha_1\alpha_2}
		= \frac{1-\alpha_2}{2\alpha_1} + \frac{\gamma-1}{2\alpha_1} = 1/2,\\
		t_{44} &= \frac{\xi^2}{2} + \frac{(\gamma-1)^2}{2\alpha_1\alpha_2}
		+ \frac{(\gamma-1)(1-\alpha_1)}{2\alpha_1\alpha_2} + \frac{(\gamma-1)(1-\alpha_2)}{2\alpha_1\alpha_2}\\
		&= \frac{(1-\alpha_1)(1-\alpha_2)}{2\alpha_1\alpha_2} + \frac{(\gamma-1)^2}{2\alpha_1\alpha_2}
		+ \frac{(\gamma-1)(2-\gamma)}{2\alpha_1\alpha_2}\\
		&= \frac{1-\gamma+\alpha_1\alpha_2}{2\alpha_1\alpha_2} + \frac{\gamma-1}{2\alpha_1\alpha_2} = 1/2.
	\end{align*}
	Therefore, 
	\[
	T_1+T_2 + \Gamma(T_1+T_2)
	= \frac{1}{2}
	\begin{bmatrix}
		1&0&0&-\xi\\
		0&1&\xi&0\\
		0&\xi&1&0\\
		-\xi&0&0&1
	\end{bmatrix}
	+ \frac{1}{2}
	\begin{bmatrix}
		1&0&0&\xi\\
		0&1&-\xi&0\\
		0&-\xi&1&0\\
		\xi&0&0&1
	\end{bmatrix}
	= I.
	\]
	\par
	Proof of (ii). First, assume $\gamma=1$.
	Then $t\in[0,1]$ and \eqref{supp-mat-S2} implies that 
	\[
	\alpha_1\alpha_2 \overset{\gamma=1}{=} (1-\alpha_2)(1-\alpha_1)
	\overset{\text{\eqref{supp-mat-S2}}}{\le} t\alpha_1\alpha_2
	\overset{t\in[0,1]}{\le} \alpha_1\alpha_2,
	\]
	whence $t=1$ and $s=1/2$.
	Since it is easily checked that $T_i\in\cK_{1/2}^{(0)}$ for all $i=1,2$, 
	we obtain (ii).
	Next, assume $\gamma>1$. 
	Since the function $\sqrt{t'}/(1+t')$, $t'\in[0,1]$, is increasing, 
	from \eqref{supp-mat-eq01'} and \eqref{supp-mat-eq01''}, 
	it follows that $\sco(v_2)=\sco(v_3)=0$ and 
	\[
	\sco(v_1) = \frac{\beta_1\beta_2/\alpha_1\alpha_2}{1 + (\beta_1\beta_2/\alpha_1\alpha_2)^2}
	\overset{\text{\eqref{supp-mat-S2}}}{\le} \frac{\sqrt{t}}{1+t} = s.
	\]
	Thus $T_1\in\cK_s^{(0)}$.
	Thanks to \eqref{supp-mat-local-uni}, we also have $T_2\in\cK_s^{(0)}$.
	Therefore, (ii) holds.
	\par
	Proof of (iii). First, assume $\gamma=1$.
	Then it is easily checked that $\Tr\rho_1T_2=0$ and 
	\[
	\Tr\rho_2T_1 = (1-\alpha_1)(1-\alpha_2) + \alpha_1\alpha_2 - 2\beta_1\beta_2
	\overset{\gamma=1}{=} \alpha_1\alpha_2 + \alpha_1\alpha_2 - 2\alpha_1\alpha_2
	= 0,
	\]
	which are just (iii).
	Next, assume $\gamma>1$.
	Since the equations \eqref{supp-mat-eq01}, $\rho_2\ket{v_2}=\rho_2\ket{v_3}=0$, and 
	\begin{align*}
		\braket{v_1 |\rho_2| v_1}
		&= (1-\alpha_1)(1-\alpha_2)
		+ \Bigl( \frac{\beta_1\beta_2}{\alpha_1\alpha_2} \Bigr)^2\alpha_1\alpha_2
		- \frac{\beta_1\beta_2}{\alpha_1\alpha_2}\cdot2\beta_1\beta_2\\
		&= (1-\alpha_1)(1-\alpha_2) - \frac{(\beta_1\beta_2)^2}{\alpha_1\alpha_2}
		= 0
	\end{align*}
	hold, we have 
	\[
	\Tr\rho_2T_1 = \frac{1}{2}\braket{v_1 |\rho_2| v_1}
	+ \frac{\gamma-1}{2\gamma}\braket{v_2 |\rho_2| v_2}
	+ \frac{\gamma-1}{2\gamma}\braket{v_3 |\rho_2| v_3}
	= 0.
	\]
	Moreover, since $\rho_2 = (U_A\otimes U_B)^\dag\rho_1(U_A\otimes U_B)$ holds, 
	we obtain 
	\begin{align*}
		\Tr\rho_1T_2
		&\overset{\text{\eqref{supp-mat-local-uni}}}{=} \Tr\rho_1(U_A\otimes U_B)T_1(U_A\otimes U_B)^\dag\\
		&= \Tr(U_A\otimes U_B)^\dag\rho_1(U_A\otimes U_B)T_1
		= \Tr\rho_2T_1 = 0.
	\end{align*}
	Therefore, (iii) holds.
\end{proof}

\begin{proof}[Proof of Lemma~$\ref{supp-mat-lem1}$]
	Assume that $s\in[0,1/2]$ and \eqref{supp-mat-S1}.
	All we need is to show that 
	\begin{enumerate}
		\item
		$T_1+T_2 + \Gamma(T_1+T_2) = I$,
		\item
		$\nege(T_i+\Gamma(T_i)) \le s$ for all $i=1,2$,
		\item
		$\Tr\rho_1T_2 = \Tr\rho_2T_1 = 0$,
	\end{enumerate}
	by the same reason as the proof of Lemma~\ref{supp-mat-lem1}.
	Since (i) and (iii) have been already proved, 
	we show only (ii).
	Also, the inequality $\gamma=\alpha_1+\alpha_2\ge1$ holds, 
	and we may assume $\alpha_1\alpha_2>0$, 
	by the same reason as the proof of Lemma~\ref{supp-mat-lem1}.
	\par
	Proof of (ii). First, assume $\gamma=1$.
	Then $s\in[0,1/2]$ and \eqref{supp-mat-S1} implies that 
	\[
	\alpha_1\alpha_2 \overset{\gamma=1}{=} (1-\alpha_2)(1-\alpha_1)
	\overset{\text{\eqref{supp-mat-S1}}}{\le} 4s^2\alpha_1\alpha_2
	\overset{s\in[0,1/2]}{\le} \alpha_1\alpha_2,
	\]
	whence $s=1/2$.
	Since it is easily checked that $\nege(T_i+\Gamma(T_i))=1/2$ for all $i=1,2$, 
	we obtain (ii).
	Next, assume $\gamma>1$. Then 
	\begin{align*}
		\nege(T_1+\Gamma(T_1)) &\le \nege(\Gamma(T_1))
		\overset{\text{\eqref{supp-mat-eq01''}}}{\le} \nege\Bigl( \Gamma\Bigl( \frac{1}{2}\ketbra{v_1}{v_1} \Bigr) \Bigr)\\
		&\overset{\text{\eqref{sc-neg}}}{=} \frac{1}{2}\|v_1\|^2\sco(v_1)
		\overset{\text{\eqref{supp-mat-eq01'}}}{=} \frac{\beta_1\beta_2}{2\alpha_1\alpha_2}
		\overset{\text{\eqref{supp-mat-S1}}}{\le} s.
	\end{align*}
	Since \eqref{supp-mat-local-uni} holds, 
	we have $\nege(T_2+\Gamma(T_2)) \le \nege(\Gamma(T_2)) = \nege(\Gamma(T_1)) \le s$.
	Therefore, (ii) holds.
\end{proof}

\section*{References}


\begin{thebibliography}{99}

\bibitem{Janotta2013}
P.~Janotta and R.~Lal, 
\textit{Phys. Rev.\ A} \textbf{87}, 052131 (2013).

\bibitem{Janotta2014}
P.~Janotta and H.~Hinrichsen, 
\textit{J.\ Phys.\ A} \textbf{47}, 323001 (2014).

\bibitem{Yoshida2018}
Y.~Yoshida and M.~Hayashi, 
arXiv:1801.03988. [\textit{J.\ Phys.\ A} (to be published)]

\bibitem{Bae2016}
J.~Bae, D.~G.~Kim, and L.~Kwek, \textit{Entropy}, \textbf{18}, 39 (2016).

\bibitem{Matsumoto2018}
K.~Matsumoto and G.~Kimura, arXiv:1802.01162.

\bibitem{Short2010}
A.~J.~Short and S.~Wehner, 
\textit{New J.\ Phys.\ }\textbf{12}, 033023 (2010).

\bibitem{Barnum2010}
H.~Barnum, J.~Barrett, L.~O.~Clark, M.~Leifer, R.~Spekkens, N.~Stepanik, A.~Wilce, and R.~Wilke, 
\textit{New J.\ Phys.\ }\textbf{12}, 033024 (2010).

\bibitem{Dahlsten2012}
O.~C.~O.~Dahlsten, D.~Lercher, and R.~Renner, 
\textit{New J.\ Phys.\ }\textbf{14}, 063024 (2012).

\bibitem{Muller2012}
M.~P.~M\"{u}ller and C.~Ududec, \textit{Phys.\ Rev.\ Lett.\ } \textbf{108}, 130401 (2012).

\bibitem{Muller2013}
M.~P.~M\"{u}ller and L.~Masanes, 
\textit{New J.\ Phys.\ }\textbf{15}, 053040 (2013).

\bibitem{Masanes2011}
L.~Masanes and M.~P.~M\"{u}ller, \textit{New J.\ Phys.\ } \textbf{13}, 063001 (2011).

\bibitem{Richens2017}
J.~G.~Richens, J.~H.~Selby, and S.~W.~Al-Safi, 
\textit{Phys.\ Rev.\ Lett.\ }\textbf{119}, 080503 (2017).

\bibitem{Lami2017}
L.~Lami, C.~Palazuelos, and A.~Winter, 
\textit{Comm.\ Math.\ Phys.\ }\textbf{361}, 661 (2018).

\bibitem{Aubrun2018}
G.~Aubrun, L.~Lami, C.~Palazuelos, S.~J.~Szarek, and A.~Winter, 
arXiv:1809.10616.

\bibitem{Lee2015}
C.~M.~Lee and J.~Barrett, \textit{New J.\ Phys.\ }\textbf{17}, 083001 (2015). 

\bibitem{PR1994}
S.~Popescu and D.~Rohrlich, 
\textit{Found.\ Phys.\ }\textbf{24}, 379 (1994).

\bibitem{Plavala2017}
M.~Plavala and M.~Ziman, 
arXiv:1708.07425.

\bibitem{Arai2019}
H.~Arai, Y.~Yoshida, and M.~Hayashi, 
\textit{J.\ Phys.\ A} \textbf{52}, 465304 (2019).


\bibitem{HayashiBook:2017}
M.~Hayashi, 
\textit{Quantum Information Theory Mathematical Foundation},
(Springer, Berlin, 2017), Second Edition.







\end{thebibliography}
\end{document}